\newif\ifFull
\renewcommand{\subsection}[1]{\paragraph{\bf #1.}}
\begin{document}
\ifFull
\pagestyle{plain}
\fi

\title{Knuthian Drawings of Series-Parallel Flowcharts}

\author{Michael T. Goodrich \and Timothy Johnson \and Manuel Torres}

\institute{
Dept. of Computer Science, University of California, Irvine, CA USA 
}

\date{}

\maketitle

\begin{abstract}
Inspired by a classic paper by Knuth, 
we revisit the problem of drawing flowcharts of loop-free algorithms,
that is, degree-three series-parallel digraphs.
Our drawing algorithms show that it is possible to produce 
Knuthian drawings of degree-three series-parallel digraphs with
good aspect ratios and small numbers of edge bends.
\end{abstract}

\section{Introduction}
In 1963, Knuth published the first paper on a 
computer algorithm for a graph drawing problem,
entitled ``Computer-drawn Flowcharts''~\cite{Knuth:1963}.
In this paper, Knuth describes an algorithm that takes as 
input an $n$-vertex directed graph $G$ that represents a flowchart and, using the modern language
of graph drawing, produces an \emph{orthogonal drawing} of $G$,
where vertices are assigned to integer grid points and edges are polygonal
paths of horizontal and vertical segments.
In Knuth's algorithm, every vertex is given the same $x$-coordinate and
every edge has at most $O(1)$ bends, so that
drawings produced using his algorithm can be output line-by-line on an 
(old-style) ASCII line printer 
and have worst-case area at most $O(n^2)$.
Some drawbacks of his 
approach are that his drawings can be highly non-planar, even
if the graph $G$ is planar, and his drawings 
can have very poor aspect ratios, since every vertex is drawn along a vertical
line.
Nevertheless, his drawings possess an additional desirable
property that has not been specifically addressed since the time of his
seminal paper, which we revisit in this paper.

Specifically, inspired by his drawing convention, we say that 
a directed orthogonal graph drawing is \emph{Knuthian} if 
there is no vertex having an incident edge 
locally pointing upwards unless that vertex is a \emph{junction} node, 
that is, a vertex having in-degree strictly greater than its out-degree.
In other words, a directed orthogonal graph drawing is Knuthian
if no non-junction node has an in-coming edge into its
bottom or an out-going edge out of its top.
This property (rotated 180 degrees) is related to previously-studied concepts 
known as ``upward'' or ``quasi-upward'' drawing 
conventions~\cite{Chan2002153,quasi-99,ggt-put-96},
where all edges 
must locally enter a vertex from below and leave going up.
Intuitively, Knuthian drawings have a natural top-to-bottom flow 
through non-junction nodes while allowing for natural reverse-directional
flow through junction nodes.

A Knuthian drawing is different from upward and quasi-upward
drawing conventions in that a Knuthian drawing places no
orientation constraints on the edges incident to junction nodes.
In other words, a Knuthian
drawing is a hybrid between a quasi-upward orthogonal drawing 
(rotated 180$^\circ$) and an unconstrained orthogonal drawing,
and this hybrid nature raises some interesting graph drawing questions.

\paragraph{Our Results.}
In this paper, we describe efficient algorithms for producing Knuthian
drawings of degree-three series-parallel directed graphs,
which are equivalent to the flowcharts of loop-free algorithms.
We provide a recursive linear-time 
algorithm for producing such drawings of degree-three 
series-parallel graphs and 
we show that such a graph with $n$ vertices has a Knuthian drawing
with width $O(n)$ and height $O(\log n)$. 
We then show 
how to ``wrap'' this drawing, while still maintaining it to be
Knuthian, to fit within a fixed width,
so that the area is $O(n \log n)$ and the aspect ratio is constant.
Our drawings strive to achieve few edge bends, both in the
aggregate and per edge.
Our drawing approach contrasts with previous approaches to drawing
series-parallel graphs, including the standard recursive split-join-and-compose
method and Knuth's original method~\cite{Knuth:1963}, as well as more
recent methods for drawing series-parallel graphs
(e.g., see~\cite{b-small-11,doi:10.1142/S0218195994000215,Hong2000165}).
For example,
we are able to match the two-bends-per-edge bound achieved
by Biedl~\cite{b-small-11} while at the same time improving the area of the 
drawing, by exploiting the area improvements possible using the Knuthian
drawing convention.
Admittedly, our Knuthian drawings are not always upward, but 
they nevertheless achieve an ``upward-like'' quality 
through the careful use of junction nodes.

\section{Classic Algorithms}
Two classic algorithms
for drawing series-parallel flowcharts are 
the standard split-join-and-compose algorithm
and Knuth's algorithm~\cite{Knuth:1963}.
We review both of these algorithms in this section.

In the standard split-join-and-compose
algorithm for drawing a series-parallel flowchart,
we draw the graph recursively according to recursive structure of
the graph in a left-to-right or top-down fashion (w.l.o.g., let us assume
a left-to-right orientation).
Two graph components formed as a parallel construction are drawn recursively
along two side-by-side horizontal strips, which are connected to a split node with in-coming edge
from the left and outgoing vertical edges that 
each bend at the in-coming segment
for the strip. Each outgoing edge is drawn as a horizontal line that bends to 
join back at a junction node having the same $y$-coordinate as the split node.
Series constructions are simply composed left-to-right and connected by
horizontal segments.
Such graphs are degree-three series-parallel graphs, because they
correspond to the control flow of a loop-free algorithm, which contains
sequences of statements and (possibly-nested) if-then-else decision 
constructs.
(See Figure~\ref{fig-s1} for two examples.)

\begin{figure}[htbp]
\begin{center}
\includegraphics[scale=0.4]{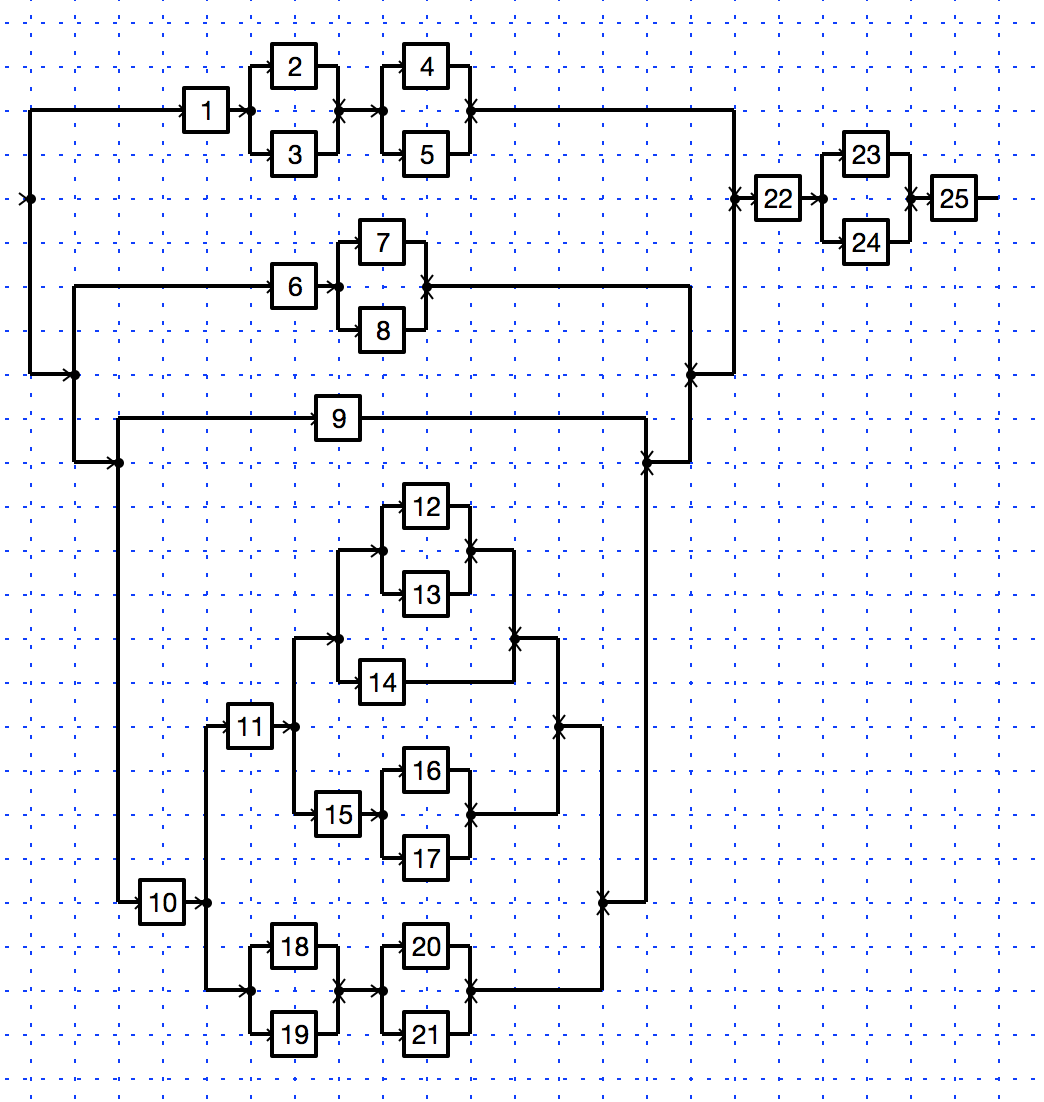} \\
(a) \\[12pt]
\includegraphics[scale=0.4]{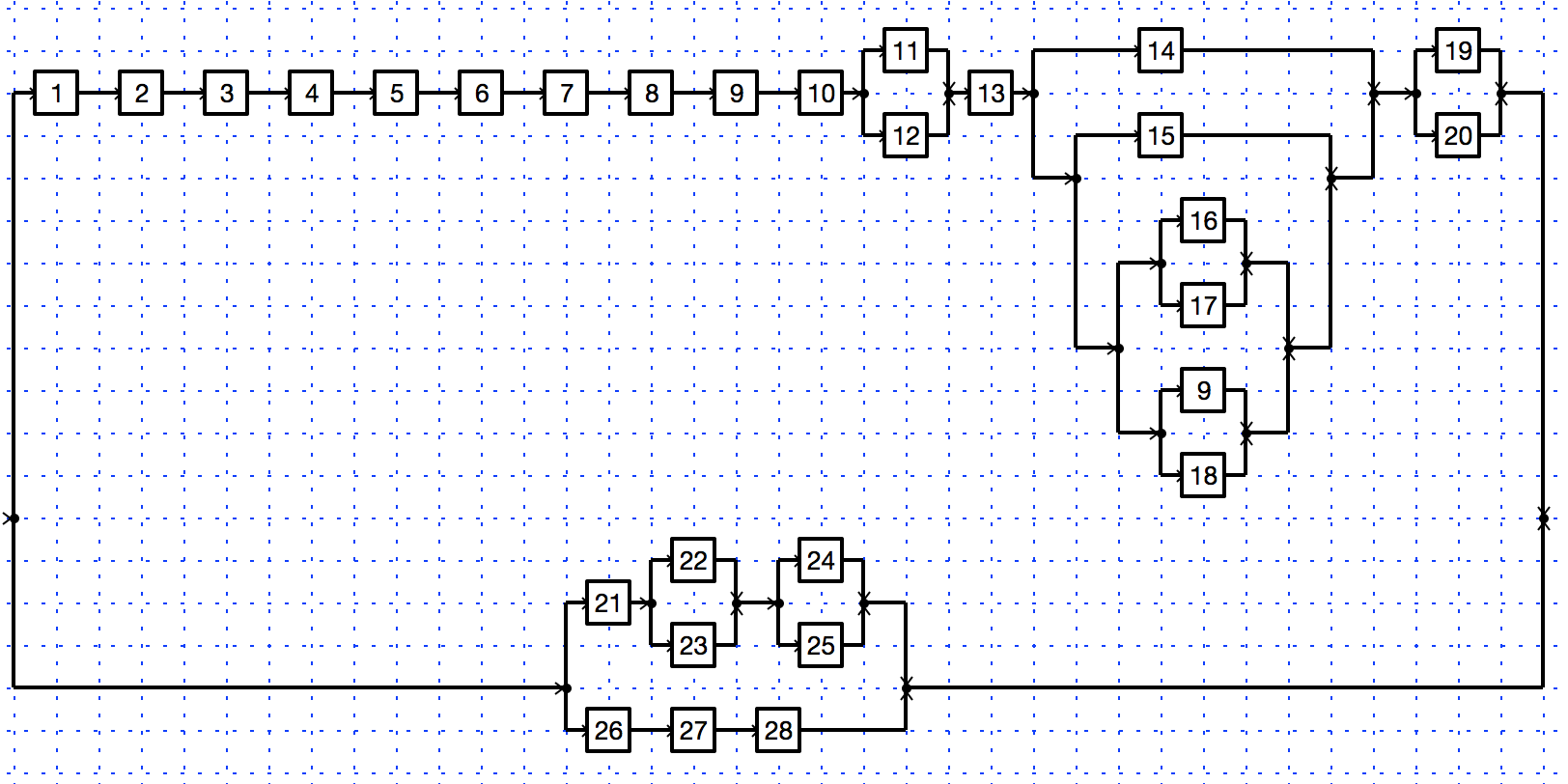} \\
(b)
\end{center}
\caption{Two graphs using the standard split-join-and-compose algorithm.
(a) Example 1; (b) Example 2.}
\label{fig-s1}
\end{figure}

In Knuth's algorithm, we place all of the nodes in a single column along the 
left side of the drawing (to all have the same $x$-coordinate) using some
reasonable ordering of the vertices. 
We then draw each edge in a top-to-bottom greedy fashion
using the space to the right of this column.
If an edge is between two consecutive nodes in the column, we simply connect
these two nodes with a single vertical segment.
Otherwise, we draw an edge $(v,w)$, with $w$ coming after $v$,
to go horizontally out from $v$ (optionally
after a first short downward segment from $v$), bending to go vertically
down to a row for $w$, and then extending horizontally back to $w$
(optionally adding a short vertical segment going into the top of $w$).
An edge $(w,v)$, with $v$ coming before $w$, is drawn in a 
similar fashion, except that the edge is drawn in reverse direction and the
optional short vertical segments would be coming into $v$
from above and going out from below $w$.
(See Figure~\ref{fig:test1} for two examples.)

\begin{figure}[htbp]
\begin{center}
\begin{tabular}{c@{\hspace{1in}}c}
\includegraphics[scale = 0.6]{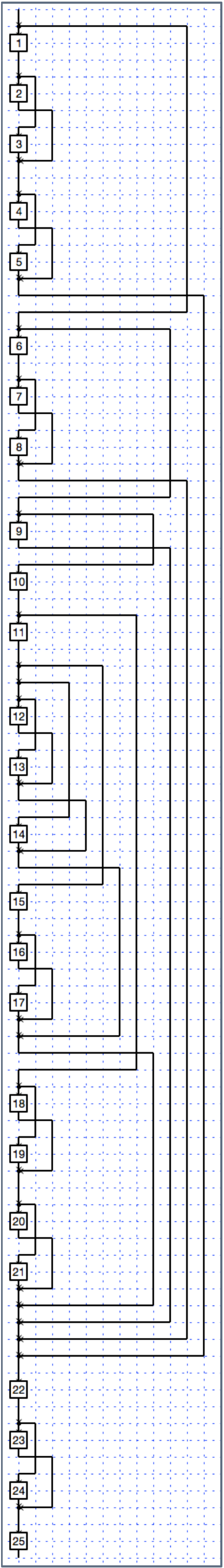}
&
\includegraphics[scale = 0.6]{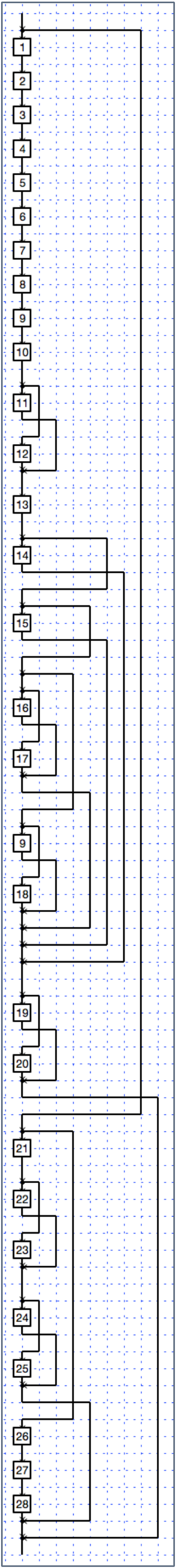} \\
(a) & (b)
\end{tabular}
\end{center}
	\caption{Two graphs drawn using Knuth's algorithm.
  (a) Example 1; (b) Example 2.}
  \label{fig:test1}
  \label{fig:test2}
\end{figure}

These algorithms both have the drawback of requiring $O(n^2)$ area
in the worst case.
For Knuth's algorithm, we will always have height $O(n)$. 
In the worst case, as exemplified in Figure~\ref{fig-w5}, 
we may have to use $O(n)$ distinct columns for edges, 
which may force the width to also be $O(n)$ in the worst case.
Alternatively, for large graphs using a small number of columns, the aspect
ratios of their drawings can be quite bad, as shown in Figure~\ref{fig:test1}.

\begin{figure}[htbp]
	\centering
	\includegraphics[scale=0.4]{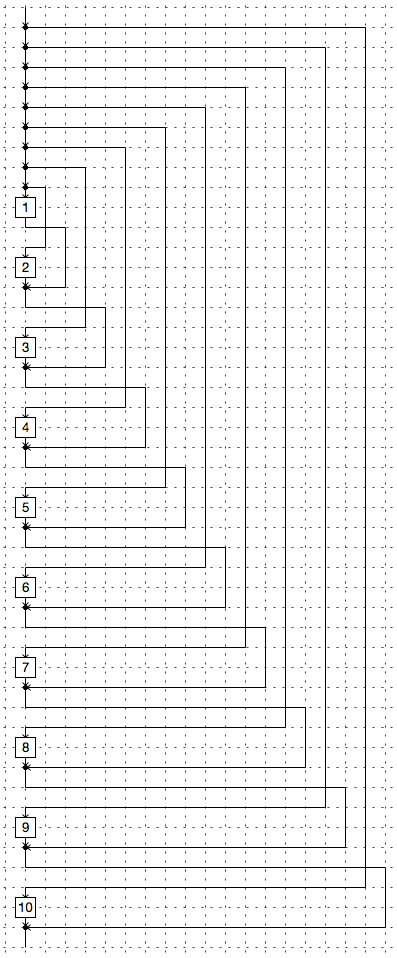}
	\caption{A worst-case quadratic-area example for Knuth's algorithm.}
        \label{fig-w5}
\end{figure}

To see that the standard split-join-and-compose algorithm
also uses $O(n^2)$ area, 
note that
we may have graph with $n/2$ nodes in series, followed by $n/2$ 
nodes in parallel. 
Then the height (in a left-to-right orientation) will be $O(n)$, 
because the $n/2$ parallel nodes will all be stacked on top of one another,
while the width will also be $O(n)$, because of the $n/2$ series nodes drawn
along the same horizontal strip.
(See Figure~\ref{fig-w6}.)

\begin{figure}[htbp]
	\centering
	\includegraphics[scale=0.6]{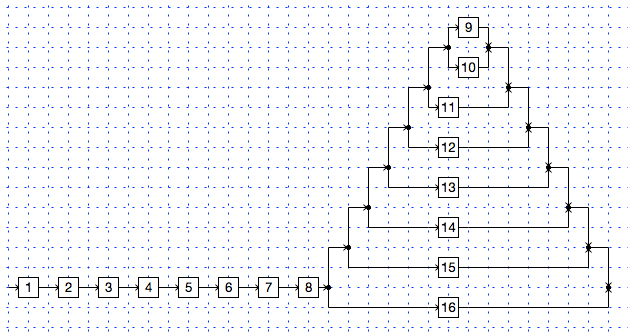}
	\caption{A worst-case quadratic-area example for the standard 
         split-join-and-compose algorithm.}
        \label{fig-w6}
\end{figure}

\section{Knuthian Drawings of Series-Parallel Flowcharts with $O(n\log n)$ Area}
In this section, we show that any $n$-vertex degree-three 
series-parallel graph has
a Knuthian drawing with $O(n \log n)$ area. 
In Appendix A, we discuss how to implement this drawing in linear time.
We achieve this area bound using a Knuthian drawing 
scheme that uses $O(n)$ width
and $O(\log n)$ height, and we show in 
the next section how to improve the aspect
ratios for such drawings while maintaining an $O(n\log n)$ area bound.

Let $G$ be a series-parallel graph with $n$ nodes.
$G$ begins with $s_1\ge 0$ nodes in series until it reaches
its first split node for two parallel subgraphs.
Likewise, $G$ ends with $s_2\ge 0$ nodes in series after its
last junction joining two parallel subgraphs.
We can categorize $G$ as being one of two types, which we respectively
call ``broad'' and ``pinched,''
as illustrated in Figure~\ref{fig3}.
In the broad case (the upper graph in Figure~\ref{fig3}), the first 
parallel split after the initial $s_1$ series nodes divides into 
series-parallel subgraphs of size $p_1$ and $p_2$, respectively,
and this parallel split is joined just before the last $s_2$ series nodes.
In the pinched case (the lower graph in Figure~\ref{fig3}), 
the first parallel split after the initial $s_1$ series nodes
(into subgraphs of size $p_1$ and $p_2$, respectively)
is not joined just before the last $s_2$ series nodes, and instead 
there is a subgraph of $p_3\ge 0$ nodes occurring
between a last parallel split (just prior to the last $s_2$ series nodes), 
which is divided into subgraphs of size $p_4$ and $p_5$. 

\begin{figure}[htb]
\centerline{\includegraphics[scale=1]{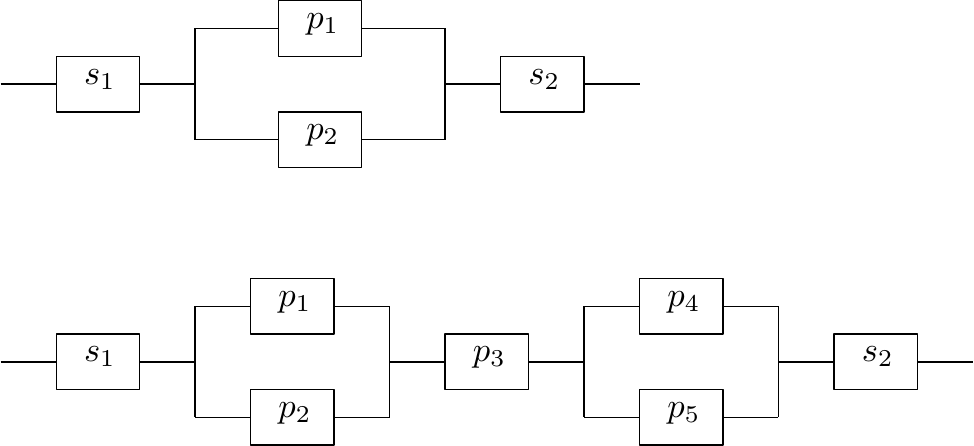}}
\caption{We categorize series-parallel graphs as one of 
these two types---``broad'' or ``pinched.''
Either the branches from the first decision node merge at the final junction node, or they merge sooner and the final junction node combines two branches from a later decision node.}
\label{fig3}
\end{figure}

Let us first consider the broad case.
We assume, without loss of generality, that $p_1 \geq p_2$. 
Then we draw the graph recursively as in Figure~\ref{fig4}, possibly with
one additional bend for the incoming and outgoing edges for the subgraph
of size $p_2$, for the incoming edge to enter in the upper left and 
the outgoing edge to exit in the lower left. 

\begin{figure}[htb]
\centerline{\includegraphics[scale=1]{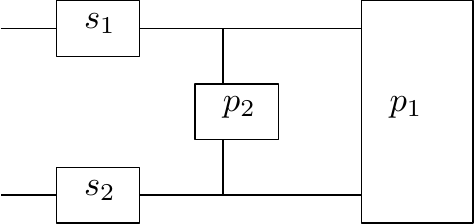}}
\caption{Recursive drawing for broad series-parallel graphs with width $O(n)$ and height $O(\log n)$.}
\label{fig4}
\end{figure}

Repeating this process results in a drawing pattern as illustrated
in Figure~\ref{fig1}.

\begin{figure}[htb]
\centerline{\includegraphics[scale=1]{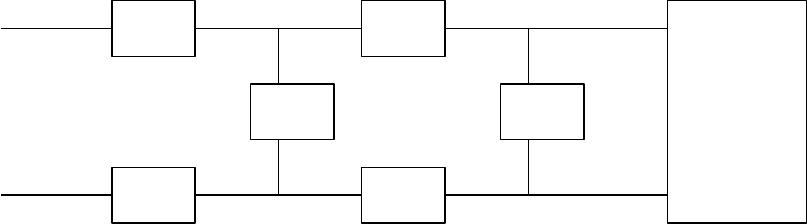}}
\caption{Our main pattern begins at the upper left and winds around to the lower left. 
}
\label{fig1}
\end{figure}

Let us next consider the pinched case.
We assume, without loss of generality, that $p_1 \geq p_2$ 
and that $p_4 \geq p_5$. 
Then we take the largest of $p_1$, $p_3$, and $p_4$, and 
recursively draw its corresponding
subgraph on the far right side, so that it can take up the full height of 
the drawing. 
Figure~\ref{fig5} shows how to recursively draw
each of the respective cases in which $p_1$, $p_3$,
or $p_4$ are the largest. 

\begin{figure}[htb]
\centerline{\includegraphics[scale=1]{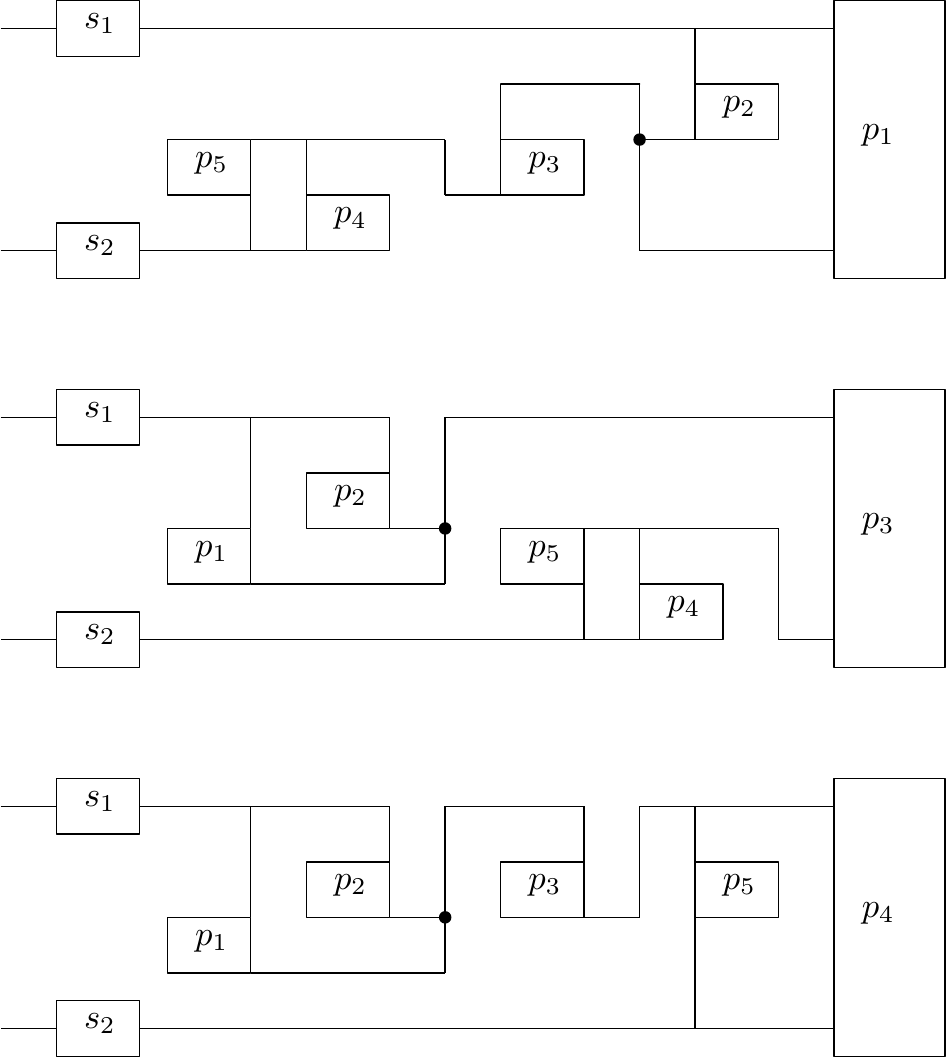}}
\caption{Recursive drawing patterns 
for pinched series-parallel graphs.
The first pattern is used when $p_1$ is the largest of $p_1$, $p_3$, and $p_4$,
the second pattern is used when $p_3$ is the largest, 
and the third pattern is used when $p_4$ is the largest.
The bold node in each pattern 
is an example of a junction node that takes advantage
of the freedom provided by a Knuthian drawing. Note that if $p_3=0$,
then we can simplify the middle component of each pattern to
be a straight line. If $p_3$ contains a chain of nodes in series with no parallel edges, then we can
draw all of the nodes along a straight line. This saves 2 bends in our pattern in either case.}
\label{fig5}
\end{figure}



This algorithm results in the following.

\begin{theorem}
\label{thm-1}
A degree-three
series-parallel graph with $n$ vertices has a Knuthian drawing with width
$O(n)$ and height $O(\log n)$, such that each edge has at most two bends
and the total number of bends is at most $1.25n$.
\end{theorem}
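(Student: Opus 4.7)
The plan is to proceed by strong induction on $n$, using the recursive templates depicted in Figures~\ref{fig4} (broad case) and~\ref{fig5} (pinched case) as the inductive step. The base case handles constant-sized $G$ directly. For the inductive step, I would classify $G$ as broad or pinched according to Figure~\ref{fig3}, recursively draw each identified subgraph by the inductive hypothesis, and then plug the resulting subdrawings into the corresponding template, adjusting the few new edges at the boundary with at most the extra bends noted in the excerpt.

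\emph{Width.} The template places the recursively drawn subgraphs in disjoint horizontal bands whose widths sum to $O(\sum_i p_i)$, while the series stretches $s_1,s_2$ contribute an extra $O(s_1+s_2)$. By induction, $W(n) \le \sum_i W(p_i) + O(s_1+s_2+1) = O(n)$. \emph{Bends per edge.} A direct inspection of Figures~\ref{fig4} and~\ref{fig5} shows that every edge introduced at the current recursion step (the routing of $p_2$ around the main spiral in the broad case, or the three-way routing through the bolded Knuthian junction in the pinched case) has at most two bends, and inductive subdrawings preserve the invariant. \emph{Total bends.} I would set up an amortized charging scheme in which a constant number of new bends at each split/junction pair are charged to the vertices just placed, and the savings for $p_3=0$ or $p_3$ being a pure series chain (explicitly noted in the caption of Figure~\ref{fig5}) trim the per-vertex amortized cost enough to bring the aggregate to at most $1.25n$.

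The main obstacle is the $O(\log n)$ height bound. A naive recurrence $H(n) \le H(p_{\max}) + O(1)$ alone is insufficient, since $p_{\max}$ can be arbitrarily close to $n$, which would only yield $H(n)=O(n)$. The plan is to argue more carefully that, because we always recurse on the largest subgraph and place its subdrawing on the ``main'' strip of the current template, the smaller side subgraphs (each of size at most $n/2$ in broad, and at most a constant fraction of $n$ in pinched once the max of $p_1,p_3,p_4$ is pulled out) can be tucked into vertical space within the envelope already allocated to the main subdrawing. A new row is added only when a peeled-off branch is itself large enough to require one. By a heavy-path style argument on the series-parallel decomposition, the number of such ``height-charging'' levels along any path is $O(\log n)$, giving $H(n)=O(\log n)$. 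Making this placement-within-envelope argument precise, and verifying that it is compatible with both templates and with the Knuthian orientation constraints, is the delicate technical step I expect to require the most care.
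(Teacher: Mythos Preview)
Your approach is essentially the paper's, and it would succeed, but you are overcomplicating the height analysis. The template places the largest recursively-drawn block so that it spans the full current height, so the height recurrence already has the form
\[
H(n)\;=\;\max\Bigl\{\,H(p_{\max}),\ \max_{i\neq\max} H(p_i)+O(1)\,\Bigr\},
\]
with no additive constant on the $p_{\max}$ term. Since in the broad case $p_2\le n/2$, and in the pinched case each non-maximal block among $p_1,\dots,p_5$ is at most $2n/3$, every term that carries the $+O(1)$ has argument at most a fixed constant fraction of $n$. Unfolding this max-recurrence gives $H(n)=O(\log n)$ immediately; no separate ``placement-within-envelope'' or heavy-path bookkeeping is needed, and your worry that $p_{\max}$ can be close to $n$ is harmless precisely because that term contributes nothing additively.

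For the total number of bends, the paper's accounting is also simpler than what you sketch: the broad template adds no bends, while each pinched template adds at most~5 bends and simultaneously places at least~4 new vertices (the two splits and two joins, even when $s_1=s_2=0$). Charging $5/4=1.25$ bends per placed vertex gives $1.25n$ directly; you do not need to invoke the $p_3=0$ or pure-series savings from the caption of Figure~\ref{fig5} to reach the bound.
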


\begin{proof}
Assume we have a degree-three series-parallel graph with $n$ nodes. 
Recall that
Figure~\ref{fig3} shows the two possible types of series-parallel graphs,
broad and pinched, that we consider. 
Let's take the first of these types, the broad case, where,
we assume, without loss of generality, that $p_1 \geq p_2$, and
we draw the graph recursively as in Figure~\ref{fig4}. 
In this case, the width, $W(n)$, is clearly $O(s_1+s_2+p_1+p2)$,
which is $O(n)$.
The required height, $H(n)$, in this case is 
$\max\{H(p_1),\, H(p_2) + 2\}$. Since $p_1 \geq p_2$, we have $p_2 \leq n/2$, 
so $H(n) \leq H(n/2) + O(1)$, when in this case. 

Now take the second case, of a pinched subgraph. 
Assume without loss of generality that $p_1 \geq p_2$ 
and that $p_4 \geq p_5$. 
Then we take the largest of $p_1, p_3, p_4$, 
and draw it on the far right side, 
so that it can take up the full height of the drawing, 
as shown in Figure~\ref{fig5}.
In each of these cases, it is easy to see that the width, $W(n)$,
is $O(n)$.
In the first of these cases, the total height, $H(n)$,
is at most
\[
\max\{H(p_1), H(p_2) + 2, H(p_3) + 2, H(p_4)+3, H(p_5) + 2\}.
\]
We know, in this case, that $p_2$ and $p_3$ are each at most $n/2$, 
and that $p_4 + p_5 \le 2n/3$. 
In addition, similar bounds hold for the second and third cases,
implying that
$H(n) \leq H(2n/3) + O(1)$, when in any of these cases. 
Thus, this bound applies independent of whether we are in the broad or
pinched cases.
Therefore, $H(n)$ is $O(\log n)$.

To establish the bound on the number
of bends per edge, notice that the number of bends
per edge is zero in our pattern for Figure~\ref{fig4}. 
There is also a base case (not shown), 
when we have only series nodes, which 
may introduce 1 bend (in the case of a single node).
The number of bends per edge
in any of our patterns in Figure~\ref{fig5} is at most two, even
considering a base case when $p_1=1$, $p_3=1$ or $p_4=1$. Moreover,
the in-coming edges into an upper 
left corner of a pattern can come from above or 
from the left, but we can ``slide'' an origin node in each case so that 
we don't introduce any new bends per edge no matter if we are entering
a pattern from the left or from above.
Likewise, a similar argument applies to the edge leaving a pattern that goes
out to the left or down.
Thus, the number of bends per edge in our final drawing is at most two.

To establish the bound on the total number of bends, notice that our 
pattern in Figure~\ref{fig3} does not introduce any bends, 
and the unshown base case of a single node introduces at most 1 bend.
Each
of our patterns in Figure~\ref{fig5} introduces at most 5 bends.
Moreover, each pattern in Figure~\ref{fig5} places 
at least 4 graph vertices into the drawing (even if $s_1=0$ and $s_2=0$).
Therefore,
we may account for all the bends we introduce by 
using an amortization scheme where we charge
each node in any of our patterns (including the unshown base case) 
for 1.25 bends.
This results in a total number of $1.25n$ bends.
\qed
\end{proof}

In an appendix, we show how this drawing can be computed in linear time.

\section{Fixed-Width Drawings}
In this section, we show how to adapt our 
$O(n \log n)$-area drawings, which admittedly have poor aspect ratios,
so that they achieve constant aspect ratios, proving the following theorem.

\begin{theorem}
A degree-three series-parallel graph with $n$ nodes has a Knuthian drawing
that can be produced in 
linear time to have width $O(A + \log n)$ and height $O((n/A)\log n)$,
for any given $A\ge\log n$; hence, the area is $O(n \log n)$.
The total number of bends is at most $3.5n + o(n)$.
\end{theorem}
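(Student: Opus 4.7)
The plan is to start with the drawing $D$ from Theorem~\ref{thm-1}, which has width $W = \Theta(n)$, height $O(\log n)$, and at most $1.25n$ bends, and then fold $D$ into $k = \lceil W/A \rceil = O(n/A)$ horizontal bands of width at most $A$ stacked vertically.

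First, I would place $k-1$ vertical cut lines in $D$ at $x$-coordinates $A, 2A, \ldots$, producing bands $B_1, \ldots, B_k$. Any single vertical cut crosses at most $O(\log n)$ edges, because the recursive spiral pattern from Figures~\ref{fig4} and \ref{fig1} has only $O(\log n)$ horizontal levels and each level meets a vertical line a constant number of times. I would then stack the bands in boustrophedon order ($B_1$ in its original orientation, $B_2$ horizontally reflected beneath it, $B_3$ in its original orientation, and so on) so that the right side of $B_i$ abuts the right side of $B_{i+1}$ and the left side of $B_{i+1}$ abuts the left side of $B_{i+2}$. Horizontal reflection preserves the Knuthian property, since that property is stated in terms of top and bottom incidences at vertices and is therefore invariant under a horizontal flip (though \emph{not} under a vertical flip).

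Second, I would reconnect each cut edge by routing it around the appropriate side of the stack through a dedicated vertical column; with at most $O(\log n)$ cut edges per boundary, we need $O(\log n)$ additional routing columns in total, giving width $O(A + \log n)$. Because the Knuthian property concerns only edge--vertex incidences and not the interior shape of edges, simply reshaping a cut edge into a U-shaped path does not violate Knuthian-ness. The resulting stack has $k$ bands of height $O(\log n)$, so the total height is $O((n/A)\log n)$ and the area is $O(n \log n)$. Linear-time construction follows from Theorem~\ref{thm-1}'s linear-time drawing plus an $O(n)$ post-processing pass that identifies cut edges, nudges cut $x$-coordinates by $O(1)$ to avoid passing through vertices or bends, and inserts the routing columns.

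To bound the bends, I would begin with the $1.25n$ count from Theorem~\ref{thm-1} and charge $O(1)$ extra bends to each cut edge. The total count of cut edges across all $k-1$ cuts is $O((n/A)\log n)$, which is $o(n)$ whenever $A = \omega(\log n)$, and in particular is $o(n)$ for $A \ge \log n$ (with a slightly looser constant at the boundary that can be absorbed into the $o(n)$ term by rescaling). A small per-band bookkeeping cost for reshaping edges near the cut boundaries (each cut edge gains up to two U-turn bends, and a handful of incidences of the spiral pattern may need an extra bend each to connect cleanly across boundaries) accumulates in the difference $3.5n - 1.25n = 2.25n$, yielding the stated $3.5n + o(n)$ bound. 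The main obstacle I expect is this bend count: one must show that the reroute bends at each cut can be charged to the crossing edges themselves without double-counting against the already-tight $1.25n$ bound, and that the constant per cut edge is small enough that the residual after amortization fits into $2.25n$. A secondary issue is handling the ``turnaround'' between consecutive bands so that no long detour is introduced; routing turnarounds through the same side columns used for cut edges resolves this within the $O(\log n)$ width budget.
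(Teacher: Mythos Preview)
Your folding idea is close to the paper's, but the key geometric move is different, and that difference is exactly where your bend accounting goes off the rails.

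\textbf{What the paper does.} The paper slices the Theorem~\ref{thm-1} drawing into $n/A$ slabs and stacks them with every other slab \emph{rotated $180^\circ$}, not horizontally reflected. A $180^\circ$ rotation swaps top and bottom, so it destroys the Knuthian property at every process and decision node in a rotated slab. The paper then repairs each such node locally using a small table of rewrites (their Figure~\ref{fig-kk}): up to $2$ extra bends at each rotated process node and an amortized $2.5$ at each rotated decision node. With roughly $n/2$ nodes sitting in rotated slabs, this yields $n$ extra bends from process nodes and $1.25n$ from decision nodes; together with the original $1.25n$ and the $o(n)$ from reconnecting cut edges, that is the paper's $3.5n+o(n)$. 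In other words, the entire $2.25n$ gap between $1.25n$ and $3.5n$ is the cost of undoing the Knuthian violations caused by rotation.

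\textbf{Where your proposal diverges.} You instead reflect odd slabs horizontally. You are right that a horizontal flip preserves the Knuthian property, since it leaves top/bottom incidences alone. But then there is no per-node repair step, and hence no source for the $2.25n$ you try to conjure with ``a small per-band bookkeeping cost \ldots accumulates in the difference $3.5n-1.25n$.'' That sentence is not an argument; under your scheme the only new bends are the $O(1)$ per cut edge, and there are $O((n/A)\log n)$ cut edges in total. So either your method proves the stronger bound $1.25n+O((n/A)\log n)$ and you should say so (the stated $3.5n+o(n)$ then follows a fortiori), or you must identify a genuine $\Theta(n)$ source of bends---and there isn't one in what you wrote.

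Two smaller issues. First, your claim that $O((n/A)\log n)$ is $o(n)$ ``for $A\ge\log n$'' is false at the boundary $A=\Theta(\log n)$, where it is $\Theta(n)$; the paper handles this by taking $A$ strictly larger than $\log n$ in the bend analysis. Second, you slice by $x$-coordinate at multiples of $A$, whereas the paper slices so that each slab contains exactly $A$ \emph{nodes}; both give $O(n/A)$ slabs, but you should say why your cut lines can be nudged to avoid vertices and vertical edge segments.
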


\begin{figure}[htbp]
\begin{center}
\begin{tabular}{c|c|c|c}
& \multicolumn{2}{|c|}{\textbf{Rotated orientation adjustments}} \\ \hline
Node type & Previous orientation & $180^{\circ}$ rotation & Total bends added \\ \hline
Process & \includegraphics[scale = 1]{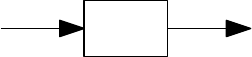} & \includegraphics[scale = 1]{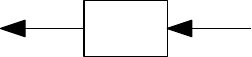} & 0 \rule{0pt}{4.0ex} \\ \hline
Process & \includegraphics[scale = 1]{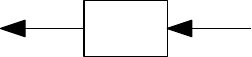} & \includegraphics[scale = 1]{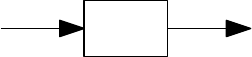} & 0 \rule{0pt}{4.0ex} \\ \hline
Process & \includegraphics[scale = 1]{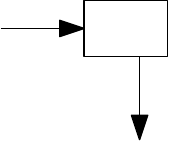} & \includegraphics[scale = 1]{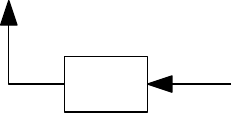} & 1 \rule{0pt}{10.0ex} \\ \hline
Process & \includegraphics[scale = 1]{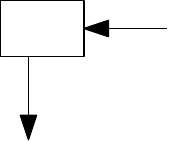} & \includegraphics[scale = 1]{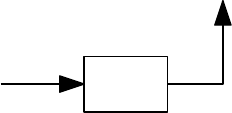} & 1 \rule{0pt}{10.0ex} \\ \hline
Process & \includegraphics[scale = 1]{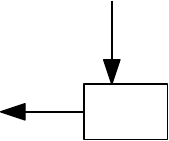} & \includegraphics[scale = 1]{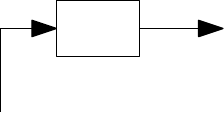} & 1 \\ \hline
Process & \includegraphics[scale = 1]{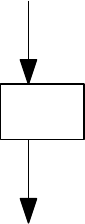} & \includegraphics[scale = 1]{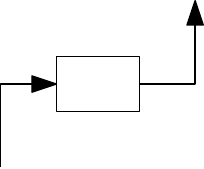} & 2 \\ \hline
Process & \includegraphics[scale = 1]{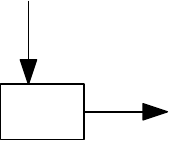} & \includegraphics[scale = 1]{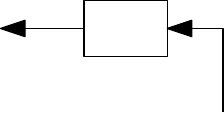} & 1 \\ \hline
Decision & \includegraphics[scale = 1]{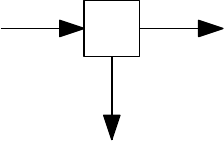} & \includegraphics[scale = 1]{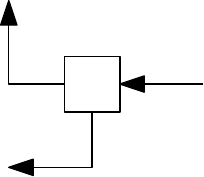} & 2 \\ \hline
Decision & \includegraphics[scale = 1]{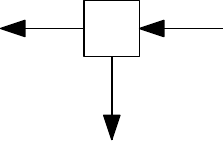} & \includegraphics[scale = 1]{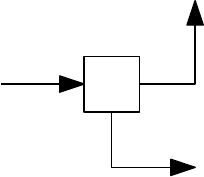} & 2 \\ \hline
Decision & \includegraphics[scale = 1]{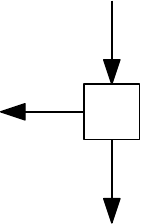} & \includegraphics[scale = 1]{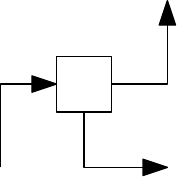} & 3 \\ \hline
Decision & \includegraphics[scale = 1]{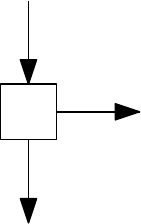} & \includegraphics[scale = 1]{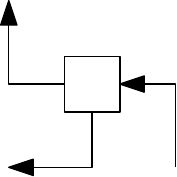} & 3 \\
\end{tabular}
\end{center}
\caption{The transformations necessary when rotating a Knuthian
drawing in order to maintain the Knuthian-drawing property. Note that we
do not need to transform junction nodes.}
\label{fig-kk}
\end{figure}

\begin{proof}
Our proof is based on 
taking our single-row drawing from Theorem~\ref{thm-1}
and performing a sequence of transformations that intuitively involve 
our dividing the drawing into ``slabs'' and then ``folding'' those slabs 
to achieve a good aspect ratio. 
The main details of the proof involve describing the changes necessary in
order to maintain the Knuthian-drawing property even as we rotate a slab
by $180^\circ$.

Let us assume, then, that we are given a drawing of our 
degree-three series-parallel graph as in Theorem~\ref{thm-1}.
First, we order all of the nodes by their $x$-coordinate, breaking ties by their $y$-coordinate. 
Then we take this list and divide it into slabs with exactly $A$ nodes (the last slab being somewhat smaller, if necessary).

Next, we stack these $n/A$ sections above one another, 
with every other section rotated $180^{\circ}$, and reconnect all of 
the edges that were cut,
as shown 
in Figure~\ref{fixedwidth}. 
Since our single-row drawing had height $O(\log n)$, there are at most $O(\log n)$ of these edges.

\begin{figure}[htb]
	\centering
	\includegraphics[scale = 1]{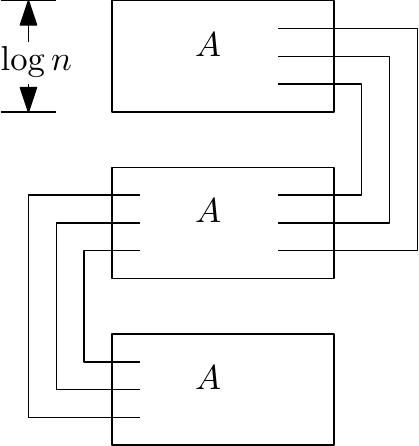}
	\caption{Wrapping our drawing into sections with $A$ nodes.}
	\label{fixedwidth}
\end{figure}

Lastly, we have to make sure that the nodes in each rotated section still satisfy our requirements for a Knuthian drawing. This requires adding bends to the incoming and outgoing edges at some of our decision and process nodes. The junction nodes do not have any restrictions on the locations of incoming or outgoing edges, so no changes to them are necessary.
The table shown in Figure~\ref{fig-kk} shows all of 
the possible connections, and how we choose to redraw them.
By inspection, it should be clear that performing these transformations
imply that the resulting drawing is Knuthian.
Adding the bends will increase the width of our graph, 
but only by a small constant factor.
The width of each slab's transformed section is $O(A)$, plus $O(\log n)$ for the extended edges. The height of each section is $\log n$, times $n/A$ sections.
This establishes the claimed width and height bounds for the theorem.

Finally, we prove the claim that our drawing has at most $3.5n + o(n)$ bends. Recall that our original drawing had at most $1.25n$ bends. Then we have added additional bends for the edges between sections, for the process nodes, and for the decision nodes.

First, we count bends for the edges between our sections of size $A$. There are $n/A$ of these sections, and $O(\log n)$ edges between them, each of which has two additional bends. This gives a total of $\frac{2n \log n}{A}$ additional bends. However, since we will choose $A$ to be larger that $\log n$, this will be a lower-order term, which we will ignore.
Second, we count bends for the process nodes. There are $n/2$ nodes in the rotated sections, and from our table we see that each could have up to two extra bends. This adds a total of up to $n$ bends.
Third, we count bends for the decision nodes. We charge the first decision node in each section to the previous recursive level, since the previous level determines from what direction we enter.

In the broad case, we ignore the first decision node and only count the two for the two subsections. The decision node for the larger subsection will take two bends, while the decision node for the smaller subsection could take three. This gives an average of $2.5$ bends per decision node.
In our diagram for the pinched case, three of the five sections have edges entering from the top, and will take three extra bends to fix when they are rotated. The other two will only take two bends. We then have two other decision nodes, but as stated, the first of these is already charged to the previous level. The other adds another two bends. This gives a total of 15 bends charged to six decision nodes, or once again, $2.5$ bends per decision node.
To finish, we need to count the total number of decision nodes in each section. This is bounded by the number of process nodes in the section, plus the difference between the number of edges leaving and entering the section. In our case, this is dominated by $A$, the number of process nodes in the section. Since half of our sections are rotated, we have at most $n/2 + o(n)$ rotated decision nodes, for a total of $1.25n$ extra bends.
So to sum up, we get $1.25n$ bends from the original graph, $o(n)$ new bends between the sections, $n$ new bends from the process nodes, and $1.25n + o(n)$ new bends from the decision nodes, for a total of $3.5n + o(n)$ bends.
\qed
\end{proof}

\section{Experimental results}

We implemented and tested our Knuth drawing algorithm algorithm
on some sample degree-three series-parallel graphs,
based on
two distributions used to create random binary series-parallel 
decomposition trees. 
The 
first case is defined by three parameters: $n$, 
the number of nodes; $p$, the probability that an internal node is a parallel composition (rather than series); and $s$, the expected fractional size of the left subtree.
To construct the tree, we first choose a root node, and set it to a parallel composition with probability $p$ and a series composition with probability $1 - p$. Then we sample $x$ from the normal distribution with mean $s \cdot n$  and standard deviation $n/20$. The size of the left subtree is then $\lfloor x \rfloor$, and the size of the right subtree is then $n - \lfloor x + 1 \rfloor$.
This process is then repeated recursively for the left and right subtrees. We reach our base case when $x < 1$, in which case we let the subtree consist of a single leaf node.
	
The second version is defined by just the first two parameters,
$n$ and $p$, as above. 
But instead of using a normal distribution to 
determine the size of each subtree, we use a uniform distribution.
	
We performed experiments by
running 300 random inputs through our algorithm for small-area drawing, 
Knuth's algorithm, and the standard split-join-and-compose algorithm 
for graphs of 10, 20, 50, 100, 200, 500, and 1000 nodes, respectively. 
We show the results of these experiments in Figures~\ref{fig-ex1}
and~\ref{fig-ex2}.
The first nine charts were created using the first notion of randomness, with varying values of $p$ and $s$.  The last three charts were created using the second notion of randomness, with varying values of $p$. The unit length used in all of the tests was 30.
Each chart has been set to use the same vertical axis for easy visual comparison between charts. 

Our experiments provide evidence that
our algorithm produces a drawing that is 
anywhere from slightly smaller up to eight times smaller than the previous two algorithms. 
One conclusion that can be drawn from these results is
that the size of our drawing is not affected greatly by the parameters, while the sizes of the other two drawings increase quickly as the number of parallel splits increases.

\begin{figure}[htbp]
\begin{center}
\begin{tabular}{cc}
\includegraphics[scale=0.42]{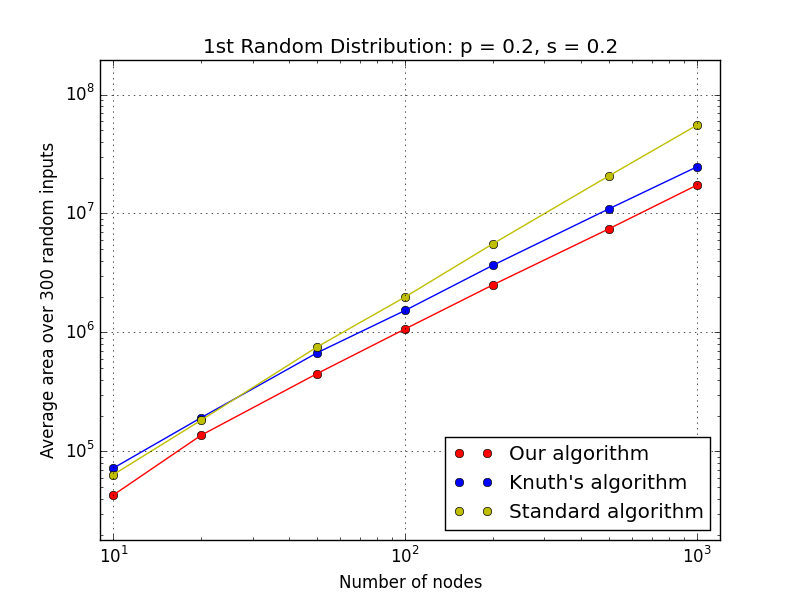} & \includegraphics[scale=0.42]{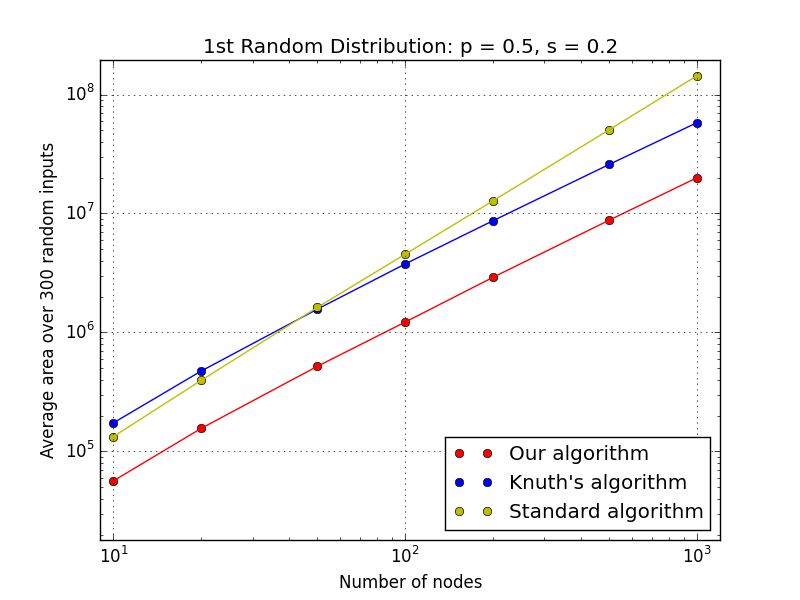} \\
\includegraphics[scale=0.42]{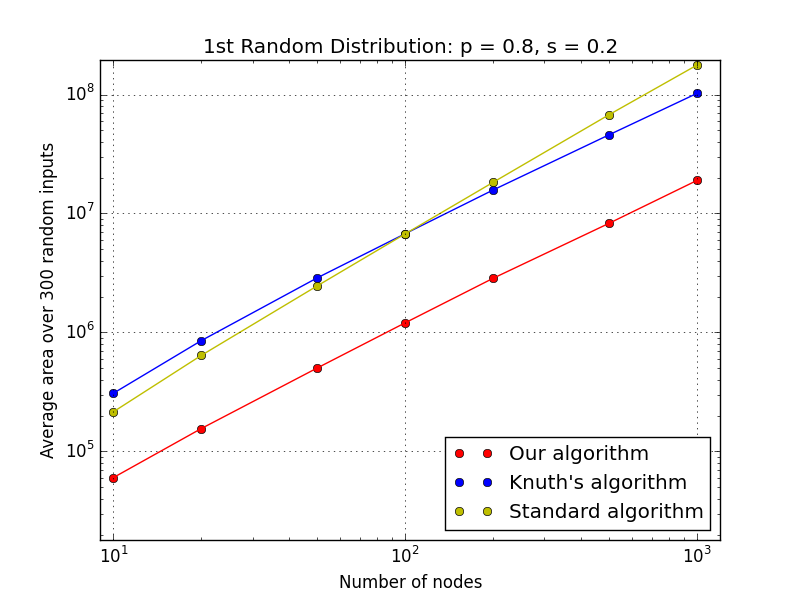} & \includegraphics[scale=0.42]{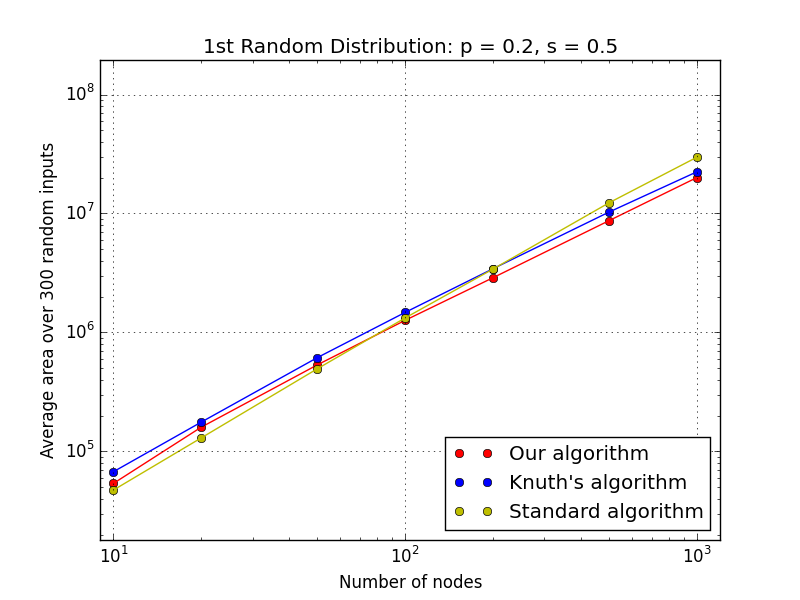} \\
\includegraphics[scale=0.42]{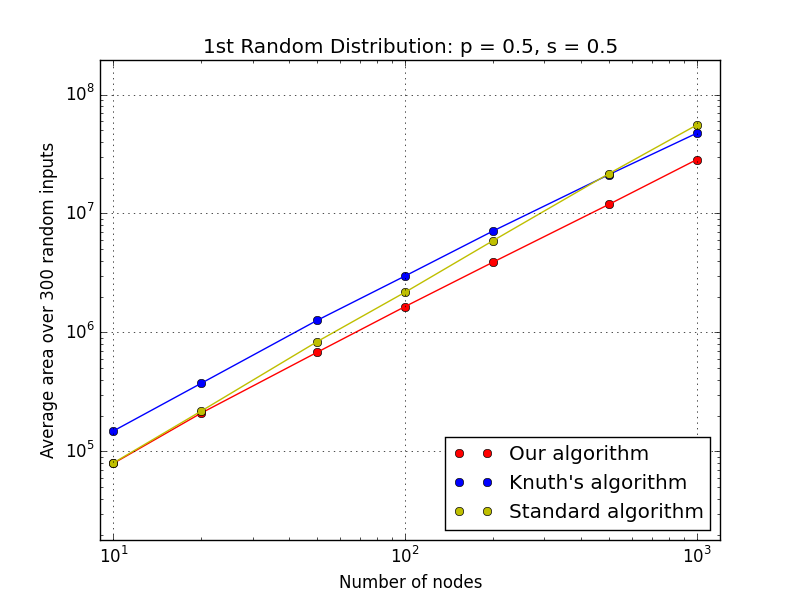} & \includegraphics[scale=0.42]{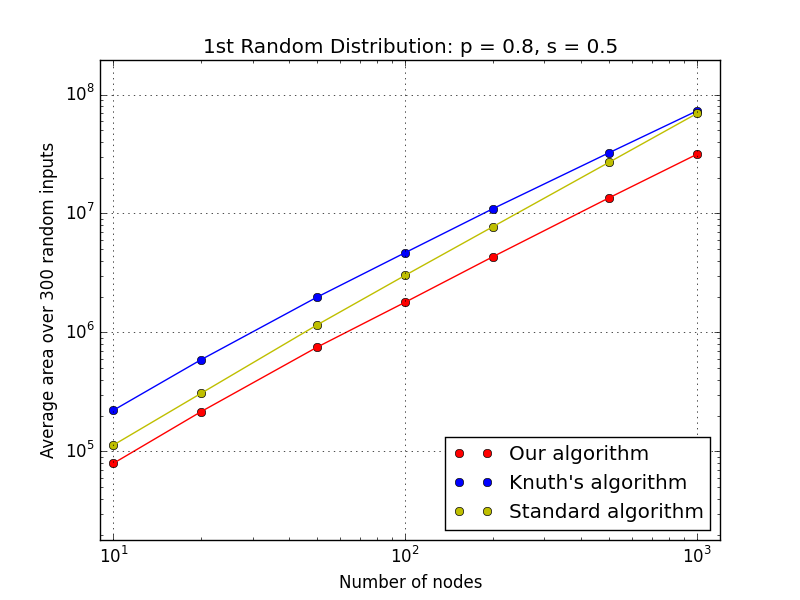} \\
\end{tabular}
\end{center}
\caption{Some results of our experiments.}
\label{fig-ex1}
\end{figure}

\begin{figure}[htbp]
\begin{center}
\begin{tabular}{cc}
\includegraphics[scale=0.42]{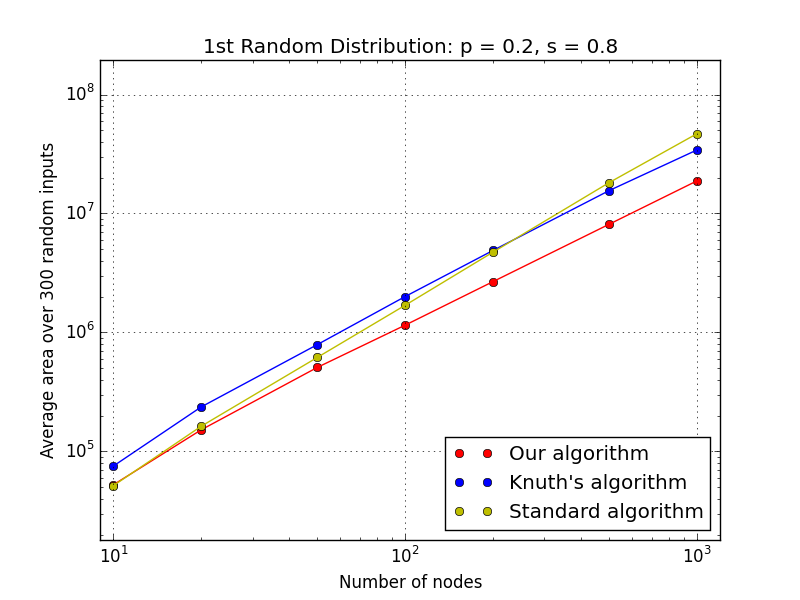} & \includegraphics[scale=0.42]{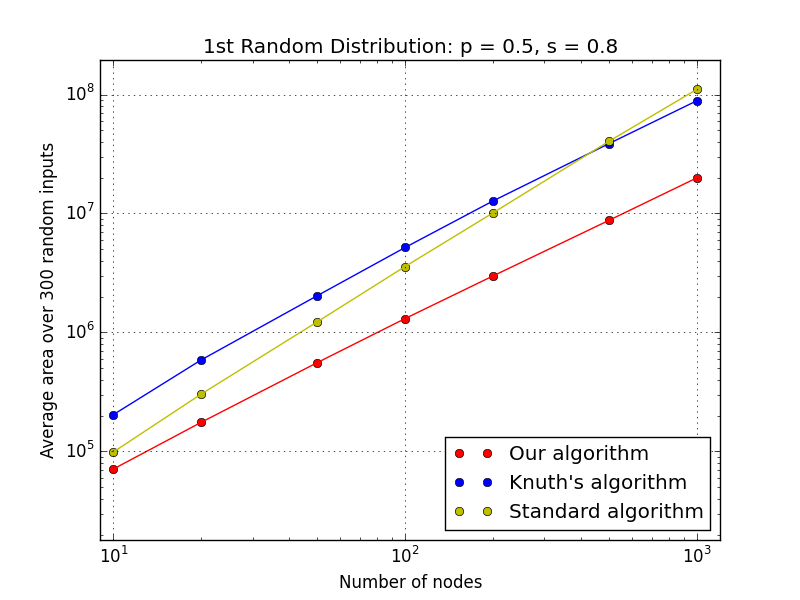} \\
\includegraphics[scale=0.42]{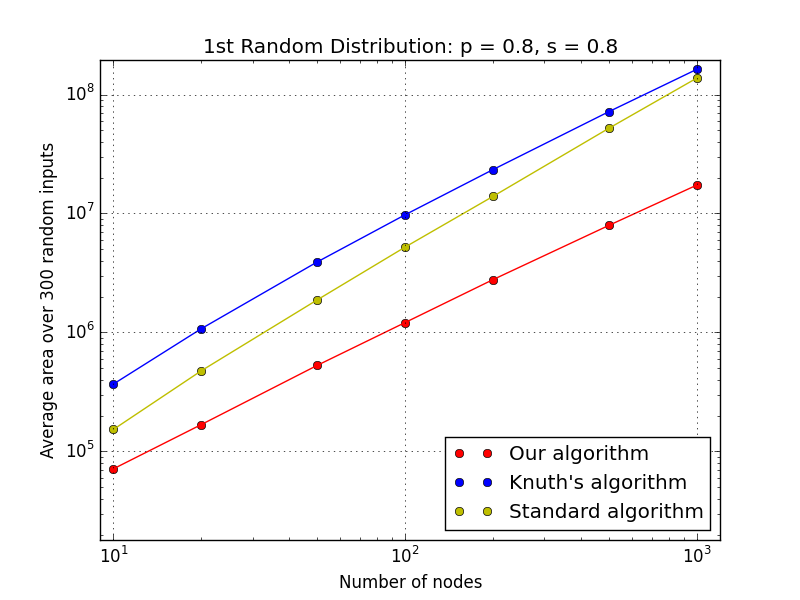} & \includegraphics[scale=0.42]{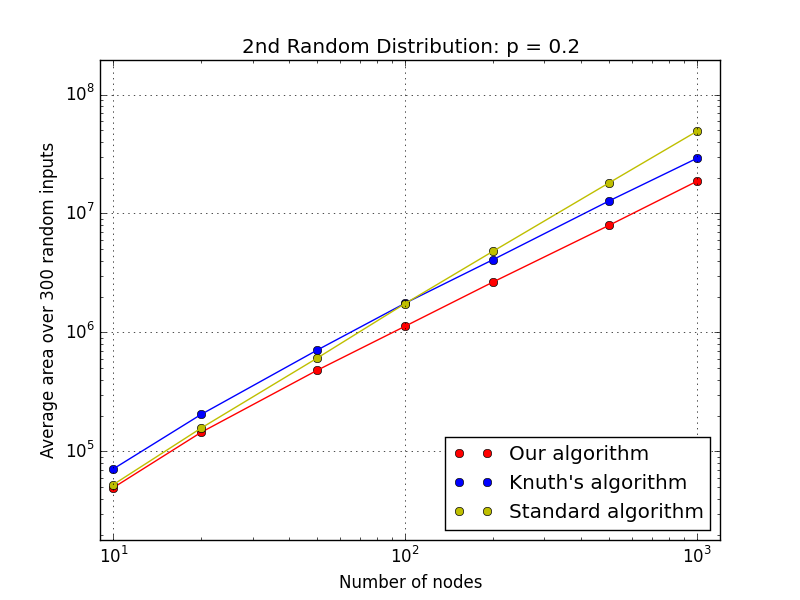} \\
\includegraphics[scale=0.42]{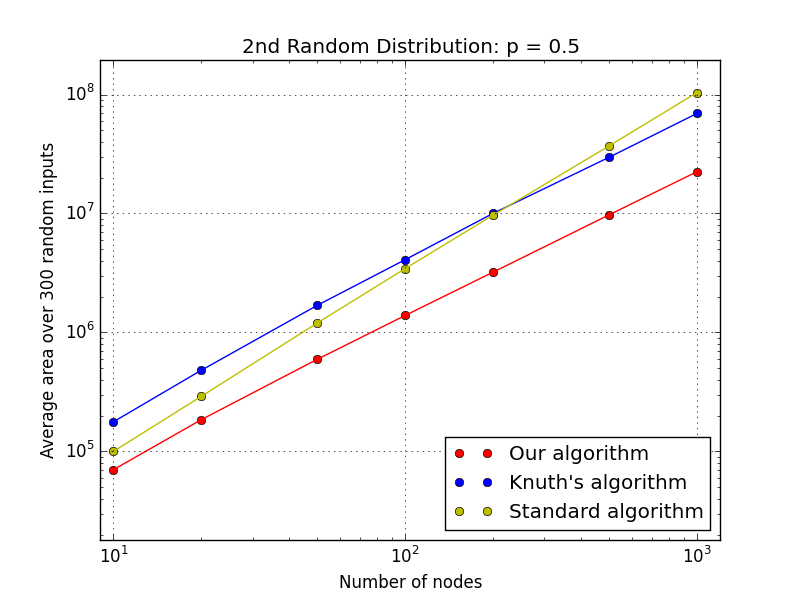} & \includegraphics[scale=0.42]{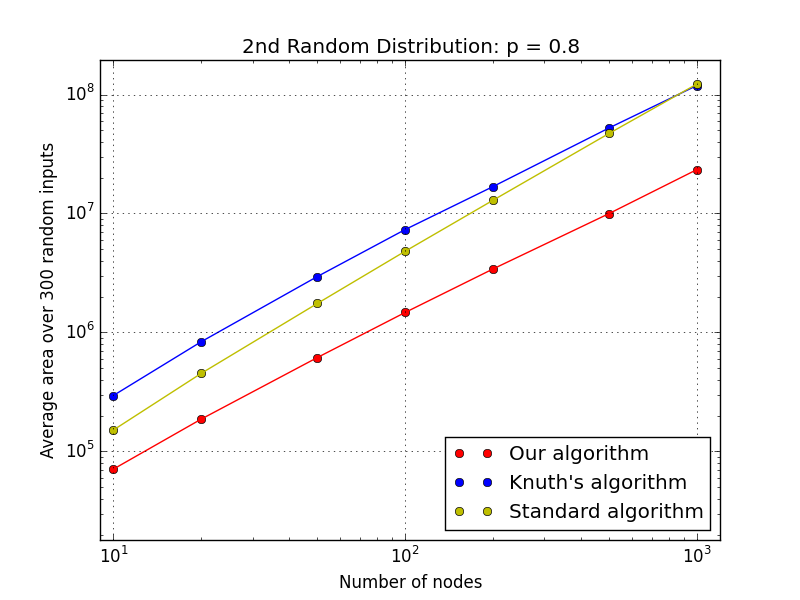}
\end{tabular}
\end{center}
\caption{Some additional results of our experiments.}
\label{fig-ex2}
\end{figure}

\subsection*{Acknowledgments} 
This research was supported in part by
the National Science Foundation under grants 1011840
and 1228639.
This article also reports on work supported by the Defense Advanced
Research Projects Agency (DARPA) under agreement no.~AFRL FA8750-15-2-0092.
The views expressed are those of the authors and do not reflect the
official policy or position of the Department of Defense
or the U.S.~Government.

{\raggedright 
\bibliographystyle{abbrv} 
\bibliography{flowcharts} 

\begin{thebibliography}{1}

\bibitem{doi:10.1142/S0218195994000215}
P.~Bertolazzi, R.~F. Cohen, G.~Di~Battista, R.~Tamassia, and I.~G. Tollis.
\newblock How to draw a series-parallel digraph.
\newblock {\em International Journal of Computational Geometry \&
  Applications}, 04(04):385--402, 1994.

\bibitem{b-small-11}
T.~Biedl.
\newblock Small drawings of outerplanar graphs, series-parallel graphs, and
  other planar graphs.
\newblock {\em Discrete \& Computational Geometry}, 45(1):141--160, 2011.

\bibitem{Chan2002153}
T.~M. Chan, M.~T. Goodrich, S.~Kosaraju, and R.~Tamassia.
\newblock Optimizing area and aspect ratio in straight-line orthogonal tree
  drawings.
\newblock {\em Computational Geometry}, 23(2):153 -- 162, 2002.

\bibitem{quasi-99}
G.~Di~Battista, W.~Didimo, M.~Patrignani, and M.~Pizzonia.
\newblock Orthogonal and quasi-upward drawings with vertices of prescribed
  size.
\newblock In J.~Kratochviyl, editor, {\em Graph Drawing}, volume 1731 of {\em
  LNCS}, pages 297--310. Springer, 1999.

\bibitem{ggt-put-96}
A.~Garg, M.~T. Goodrich, and R.~Tamassia.
\newblock Planar upward tree drawings with optimal area.
\newblock {\em Int. J. Comput. Geom. Appl.}, 06(03):333--356, 1996.

\bibitem{Hong2000165}
S.-H. Hong, P.~Eades, and S.-H. Lee.
\newblock Drawing series parallel digraphs symmetrically☆.
\newblock {\em Computational Geometry}, 17(3–4):165 -- 188, 2000.

\bibitem{Knuth:1963}
D.~E. Knuth.
\newblock Computer-drawn flowcharts.
\newblock {\em Commun. ACM}, 6(9):555--563, Sept. 1963.

\end{thebibliography}
}

\clearpage
\begin{appendix}
\section{Appendix}
In this appendix,
we discuss in depth the implementation details for 
constructing the single-row drawing in linear time. 
The implementation occurs in three stages: initialization, preprocessing, and drawing.  
The input to the algorithm is a string representing a preorder traversal of the tree representing a degree-three series-parallel graph.

In the first stage, a decomposition tree is created in a linear-time pass over the input string. 
The second stage consists of determining for each node $v$ the type of graph rooted at $v$ based off of the two types of graphs in Figure~\ref{fig5}. 
Recall that we named the first type of graph a broad graph, and the second type of graph a pinched graph.

The subtree at $v$ describes a broad graph if it indicates a parallel composition or if it indicates a series composition and at most one of its children is broad.  
Otherwise, the graph is pinched.  
In either case, we store pointers in $v$ that point to the nodes in the subtree of $v$ that are the roots of the subtrees for each component. 
In a broad graph, these pointers are for the subgraphs of sizes
$s_1$, $s_2$, $p_1$, and $p_2$, respectively.  
In a pinched graph, we add pointers for 
the subgraphs of sizes $s_1$, $s_2$, $p_1$, $p_2$, $p_3$, $p_4$, and $p_5$,
respectively. We begin at the leaves and propagate the types of graphs up the tree so that only a constant amount of work is being done at each node to check the children and set pointers, so the second stage is also a 
linear-time operation.

Having these pointers for each node is advantageous in the third stage so that we don't have to perform a search to find the different sections. Then as we construct each section, we store an internal representation that can be shifted to any origin point by changing a single offset value, so each shift takes constant time. We perform the actual drawing only once we know where every component belongs.

As an example, let us
describe the case in which $p_3$ is the largest. For pinched graphs, 
we first construct drawings of the subgraphs of sizes $s_1$ and $s_2$. 
We then draw the subgraphs of sizes $p_1$ and $p_2$ in parallel at the end of section $s_1$ using the left-to-right drawing style given in 
Figure~\ref{fig4}. 

Next we construct the drawings of the subgraphs of size $p_4$ and $p_5$ 
using the right-to-left, upside-down drawing style given in Figure~\ref{fig5}.
We need to construct these before constructing the drawing of the
subgraph of size $p_3$, because we may need to increase the height of this
drawing, since it needs to be taller than $p_4$ and $p_5$.  
Once we draw the subgraph of size $p_3$ with the correct height, 
we then shift the drawings for the subgraphs of size $p_4$ and $p_5$ into their
proper positions. To finish the drawing, 
we shift the drawing of 
the subgraph of size $s_2$ to the end of the $p_4$ and $p_5$ section.

Finally, we draw every node at its proper location. 
Since there is no backtracking, each node is preprocessed once, constructed once, shifted once to fit with the other drawings in the same step, and drawn once.
Thus, the overall time complexity is $O(n)$.

\end{appendix}

\end{document}